\newtheorem{theorem}{\bf Theorem}
\newtheorem{lemma}[theorem]{\bf Lemma}
\newtheorem{example}{Example}
\newtheorem{assumption}{Assumption}
\newtheorem{definition}{Definition}
\newcommand{\nn}{\nonumber}
\def\QED{~\rule[-1pt]{5pt}{5pt}\par\medskip}
\renewenvironment{proof}{{\bf Proof: \ }}{\hfill \QED}
\let\ALP  \mathcal
\newcommand{\beqq}[1]{\begin{align*} #1 \end{align*}}
\newcommand{\Na}{\mathbb{N}}
\title{Change Detection of Markov Kernels with Unknown Pre and Post Change Kernel}
\author{Hao Chen \and
        Jiacheng Tang \and
        Abhishek Gupta\thanks{Jiacheng Tang is with GE Global Research, Niskayuna, NY. Hao Chen and Abhishek Gupta are with the Department of Electrical and Computer Engineering at The Ohio State University, Columbus, OH, USA. Email: {\tt\small jiacheng.tang@ge.com, chen.6945, gupta.706@osu.edu}.}}
\begin{document}
\maketitle
\begin{abstract}
In this paper, we develop a new change detection algorithm for detecting a change in the Markov kernel over a metric space in which the post-change kernel is unknown. Under the assumption that the pre- and post-change Markov kernel is uniformly ergodic, we derive an upper bound on the mean delay and a lower bound on the mean time between false alarms. A numerical simulation is provided to demonstrate the effectiveness of our method. 
\end{abstract}
\section{Introduction}
Change detection algorithms take a stream of observations as input and detect whether or not the statistical properties of an underlying system (generating observations) have changed. These algorithms form the bedrock of fault and attack detection in dynamic systems \cite{tang2021dynamic} and anomaly detection in signal processing and industrial systems \cite{ranshous2015anomaly, veeravalli2014quickest}. We propose a new kernel-based online change detection algorithm for Markov chains that is model-free (prior information about the transition kernels is not needed) and derive its performance bounds. The two-sample test is at the core of our change detector, as its main purpose is to determine whether two groups of samples come from the same distribution. A change detector, in essence, does this sequentially for data streams. The kernel-based two-sample test is developed in \cite{gretton2012kernel}, which is advantageous in data-driven applications, especially when the sample space is high dimensional. It embeds probability distributions of different groups of samples into a reproducing kernel Hilbert space (RKHS) through an injective map and measures the distance between the embeddings using the RKHS norm. The metric between the probability measures induced by this distance is called the maximum mean discrepancy (MMD). The difference in distributions is detected when the MMD exceeds a certain threshold.

The theory of RKHS embeddings of probability measures can be found in \cite{sriperumbudur2010hilbert}, which provides inspiration and a theoretical foundation for our work. MMD-based change detection has received significant attention in the past decade \cite{Li2015mstatistic} \cite{flynn2019change} \cite{li2019scan} \cite{zaremba2013b}. The test statistics of these methods are built around the empirical estimation of the MMD. Two major online change detection frameworks are the cumulative sum (CUSUM) type and (scanning) window type. The kernel CUSUM method introduced in \cite{flynn2019change} is an adaptation of the well-established CUSUM statistics \cite{page1954continuous} by substituting the distance function from the empirical KL-divergence to the empirical MMD.

The performance guarantees for the change detection methods are given in the form of bounds on the mean time between false alarms (MTBFA), also known as the average run length (ARL), and the mean delay (MD). When samples are i.i.d., the upper bound for MTBFA and lower bound for MD are obtained using standard concentration inequalities for supermartingales in \cite{flynn2019change}. The window type approach presented in \cite{li2019scan} employs a specially designed sliding window framework to reduce computation complexity during the estimation of MMD. The performance bounds are derived using a sophisticated change-of-measure technique. However, for samples generated from aperiodic Markov chains, the analysis in the aforementioned studies breaks down due to the dependency structure, which motivates us to search for a new set of tools. 

Change detection for Markov chains has been studied earlier \cite{xian2016online}. The authors adapt the CUSUM statistics for uniformly ergodic Markov chains over finite state spaces. The upper bound on MTBFA is obtained via characterizing the distribution of the test statistic as time goes to infinity, and the lower bound on MD is derived by applying a version of Hoeffding's inequality \cite{glynn2002hoeffding} for uniformly ergodic Markov chains. In contrast, our setting is in the general state space, but the uniform ergodicity condition is kept, as Hoeffding's inequality is a crucial step in our performance analysis.


\textbf{Our Contribution:} Inspired by the existing approaches, we devised an online change detection algorithm for uniformly ergodic Markov Chains with general state space. The primary contributions of this paper are summarized below:

\begin{itemize}
    \item The kernel-based \textbf{online} change detection algorithm we propose works with \textbf{weakly dependent samples} generated from uniformly ergodic Markov chains. This is different from \cite{Solowjow2020dynamic}, which avoids the dependency through downsampling of the system trajectory.
    
    \item The proposed algorithm addresses the change detection problem with both \textbf{unknown pre- and post-change} Markov kernels. This is an improvement over the method in \cite{xian2016online} which requires the knowledge of the pre-change transition matrix. 
    
    \item The Markov chains are assumed to have \textbf{general state space}, i.e., metric space, which makes our method superior to existing methods that assume finite state space \cite{xian2016online}.
    
    \item Our algorithm detects the change in product measure $\pi_P \otimes P$. Testing for the invariant distribution for the Markov chain is insufficient, as one can construct Markov kernels $Q \neq P$ with the invariant measure $\pi_Q = \pi_P$.
\end{itemize}

The rest of the paper is organized as follows. In Section \ref{sec: preliminaries}, we review the definition and preliminary results of MMD. In Section \ref{sec: problem formulation}, we formally state our problem setup. We present the test statistics and the main results deriving the bounds on MTBFA and MD in Section \ref{sec: main results}, followed by the numerical simulations in Section \ref{sec: numerical simulations}.

\section{Preliminaries}
\label{sec: preliminaries}
In this section, we introduce the notion of RKHS, kernel mean embedding, maximum mean discrepancy (MMD), and weak dependence. The key reference for the used terminology is \cite{sriperumbudur2010hilbert}. Consider a metric space $(\mathbb{X}, \mathscr{X})$ and a positive definite kernel function $k: \mathbb{X}\times\mathbb{X} \to \mathbb{R}$ (see \cite{sriperumbudur2010hilbert} for definition of positive definite kernels). There exists a reproducing kernel Hilbert space (RKHS) $\mathcal{H}_k$ of functions equipped with an inner product $\langle\cdot, \cdot \rangle_{\mathcal{H}_k}$. For any $f\in \mathcal{H}_k$, $f(x) = \langle f, k(x, \cdot)\rangle_{\mathcal{H}_k}$ due to the reproducing property. 

We assume the kernel is bounded, and without loss of generality, we set the bound to be 1, i.e., $\sup_{x,y\in\mathbb{X}} |k(x,y)|\leq 1$. Let $\mathsf{P}$ be a measure in the space of probability measures over $\mathbb X$, denoted by $\mathcal{P}(\mathbb X)$. The \textit{kernel mean embedding} $\mu_\mathsf{P}$ is defined as $\mu_\mathsf{P}(\cdot) \coloneqq \mathbb{E}_{X\sim\mathsf{P}}[k(X, \cdot)]$. This mapping from $\mathcal{P}$ to $\mathcal{H}_k$ is injective if and only if the kernel $k$ is characteristic.

The integral probability metric (IPM) between two measures $\mathsf{P}, \mathsf{Q}\in\mathcal P(\mathbb X)$ is defined as
$$\gamma = \sup_{f\in\mathcal{F}}\left | \mathbb{E}_{X\sim\mathsf{P}}[f(X)] - \mathbb{E}_{X\sim\mathsf{Q}}[f(X)] \right |,$$
where $\mathcal{F}$ is some class of functions. It is referred to as maximum mean discrepancy (MMD) when $\mathcal F$ is chosen as the unit ball in the RKHS $\mathcal{H}_k$. Under the assumption that the kernel function $k$ is bounded, the MMD can also be written as the distance between the embedding $\gamma_k(\mathsf{P}, \mathsf{Q}) = \|\mu_\mathsf{P}-\mu_\mathsf{Q}\|_{\mathcal{H}_k}$.
Note that $\gamma_k$ is a pseudometric in when a general selection of the kernel function, since it fails to satisfy $\gamma_k(\mathsf{P}, \mathsf{Q}) = 0 \implies \mathsf{P} = \mathsf{Q}$. The kernel $k$ being characteristic is a necessary and sufficient condition for $\gamma_k$ to be a metric. The square MMD is more commonly used in statistical hypothesis testing:
\begin{align}
\label{eqn: squared mmd}
    \gamma^2_k(\mathsf{P}, \mathsf{Q})
    & = \mathbb{E}_{X, X'\sim\mathsf{P}}[k(X, X')] - 2\mathbb{E}_{X\sim\mathsf{P}, Y\sim\mathsf{Q}}[k(X, Y)] \nonumber\\
    & + \mathbb{E}_{Y,Y'\sim\mathsf{Q}}[k(Y, Y')].
\end{align}
Several criteria for characteristic kernels are provided in \cite{sriperumbudur2010hilbert}. When the kernel is translation invariant, i.e. $k(x, y) = \psi(x-y)$ and defined on $\mathbb{R}^d$, $k$ is characteristic if and only if the support of its Fourier transform is $\mathbb{R}^d$ \cite{sriperumbudur2010hilbert}. For a comprehensive characterization of kernels $k$ such that $\gamma_k$ is a metric, please refer to Table 1 in \cite{muandet2016kernel}. The Gaussian kernel is frequently used as the characteristic kernel for MMD:
\begin{align}
\label{ex: Gaussian kernel}
    k(x,y) = \psi(x-y) &= \exp{\left(-\frac{\|x-y\|^2}{2\sigma^2}\right)}, x, y \in \mathbb{R}^d, \sigma > 0.
\end{align}

\section{Problem Formulation}
\label{sec: problem formulation}

Consider a Markov chain $\{X_i\}_{i\in\mathbb{N}}$ defined on a probability space $(\Omega, \mathcal{F}, \mathbb{P})$ with values in a metric space $(\mathbb{X}, \mathscr{X})$. Let $P: \mathbb{X} \to \mathcal{P}(\mathbb{X})$ denote the transition kernel of the Markov chain. At some unknown time $\tau \in\mathbb{N}$, the Markov chain immediately switches its transition kernel to $Q$ ($P \neq Q$). Both the pre-change kernel $P$ and the post-change kernel $Q$ are assumed to be uniformly ergodic, and they satisfy Doeblin's condition, which is defined below.
\begin{assumption}
    [Doeblin's Condition]
    \label{asp: Doeblin condition}
        A Markov kernel $P$ satisfies Doeblin condition if there exist a probability measure $\phi$ on $\mathbb{X}$, $\lambda > 0$, and some integer $l\geq 1$ such that
        $P^l(x, \cdot) \geq \lambda\phi(\cdot),$
        for all $x\in\mathbb{X}$.
\end{assumption}

The Doeblin's condition is equivalent to uniform ergodicity for aperiodic irreducible Markov chains. We refer the reader to Theorem 16.2.3 in \cite{meyn2012markov}. Note that, Doeblin's coefficients for $P$ and $Q$ do not need to be the same. We need Doeblin's condition to hold for the Markov chains discussed in this paper as it allows us to apply Hoeffding's inequality, which is key to obtaining the performance bounds. Although Assumption \ref{asp: Doeblin condition} seems rather restrictive from a Markov chain perspective, it is satisfied by many dynamic systems (linear and nonlinear systems). In Section \ref{sec: example section}, we discuss the characterization of such systems satisfying \ref{asp: Doeblin condition}.

To properly define a distribution of samples generated from the pre- and post-change Markov chains, we now introduce the general RKHS embedding of Markov kernels.

\subsection{RKHS Embedding of Markov Kernels}
Consider a measurable space $(\mathbb{X}, \mathscr{X})$ and let $P: \mathbb{X} \to \mathcal{P}(\mathbb{X})$ be a Markov kernel. Assume that the kernel $P$ admits an invariant probability measure $\pi: \mathscr{X} \to [0, 1]$. Define a probability measure $F_P$ on measurable space $(\mathbb{X}\times\mathbb{X}, \mathscr{X}\otimes\mathscr{X}) \coloneqq
(\mathbb{Z}, \mathscr{Z})$ as  
\begin{equation}
\label{eqn: second order kernel}
    F_P(A\times B) = \pi \otimes P (A \times B) = \int_{A}\pi(dx)P(x, B), 
\end{equation}
$\text{for } A,B\in\mathscr{X}$, where $\otimes$ denotes the product $\sigma$-algebra or product measure. Let $k: \mathbb{Z}\times \mathbb{Z} \to \mathbb{R}$ be a characteristic kernel function, and $\ALP H_k$ be the corresponding RKHS. Then the RKHS embedding of $F_P$ is written as $ \mu_P(\cdot) = \mathbb{E}_{z \sim F_P}(k(z, \cdot))$. In the following lemma, we identify a condition under which $P\neq Q$ implies $F_P\neq F_Q$.
\begin{lemma}
\label{lem: RKHS for MC}
    Let $P$ and $Q$ be two Markov kernels with invariant distributions $\pi_P$ and $\pi_Q$, respectively. If there exists a closed set $\mathbb A$ in $\ALP X$, such that $P(x,\cdot)\neq Q(x,\cdot)$ for all $x\in\mathbb A$ and $\pi_P(\mathbb A)>0$, then $F_P\neq F_Q$.
\end{lemma}
\begin{proof}
    Please refer to Appendix \ref{app: RKHS for MC}.
\end{proof}
Denote $\{\tilde X_i\}_{i\in\mathbb{N}} = \{(X_i, X_{i+1})\}_{i\in\mathbb{N}}$ as the second-order Markov chain for the original Markov chain $(X_i)_{i\in\mathbb{N}}$. In the following lemma, we show that the second-order Markov chain also satisfies Doeblin's condition.

\begin{lemma} \label{lem: UE for extended space}
    Let $\{X_i\}_{i\in\mathbb{N}}$ be a Markov chain satisfying Doeblin's condition, and $\{\tilde X_i\}_{i\in\mathbb{N}} = \{(X_i, X_{i+1})\}_{i\in\mathbb{N}}$ be the second-order Markov chain. Then, $\{\tilde X_i\}_{i\in\mathbb{N}}$ also satisfies Doeblin's condition.
\end{lemma}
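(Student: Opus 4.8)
The plan is to produce an explicit Doeblin triple $(\tilde l,\tilde\lambda,\tilde\phi)$ for the second-order chain directly from the triple $(l,\lambda,\phi)$ guaranteed for $P$. First I would write down the transition kernel $\tilde P$ of $(\tilde X_i)_{i\in\mathbb N}$: from a state $(x,y)\in\mathbb X\times\mathbb X$ the chain moves to $(y,y')$ with $y'\sim P(y,\cdot)$, so $\tilde P((x,y),C)=\int_{\mathbb X}\mathbbm 1_C(y,y')\,P(y,dy')$. Only the second coordinate of the current state affects the future, so iterating gives, for every $m\geq 0$,
\[
\tilde P^{\,m+1}((x,y),C)=\int_{\mathbb X} P^{m}(y,dz)\int_{\mathbb X}\mathbbm 1_C(z,z')\,P(z,dz').
\]

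Next I would take $\tilde l=l+1$ (i.e. $m=l$ above) and invoke the Doeblin condition for $P$ in its measure-theoretic form: $P^l(y,\cdot)\geq\lambda\phi(\cdot)$ for all $y$ means $\int g\,dP^l(y,\cdot)\geq\lambda\int g\,d\phi$ for every nonnegative measurable $g$. Applying this with the nonnegative test function $g(z)=\int_{\mathbb X}\mathbbm 1_C(z,z')\,P(z,dz')$ yields, uniformly in $(x,y)$,
\[
\tilde P^{\,l+1}((x,y),C)\;\geq\;\lambda\int_{\mathbb X}\phi(dz)\int_{\mathbb X}\mathbbm 1_C(z,z')\,P(z,dz')\;=\;\lambda\,(\phi\otimes P)(C).
\]
Setting $\tilde\phi=\phi\otimes P$ and $\tilde\lambda=\lambda$ then establishes the Doeblin condition for $(\tilde X_i)_{i\in\mathbb N}$, once I note that $\tilde\phi$ is a probability measure on $(\mathbb X\times\mathbb X,\mathscr X\otimes\mathscr X)$, which is immediate since $(\phi\otimes P)(\mathbb X\times\mathbb X)=\int_{\mathbb X}\phi(dz)\,P(z,\mathbb X)=1$.

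There is no deep obstacle here; the only points requiring care are bookkeeping ones. The first is getting the index shift right: the second-order chain must take $l+1$ of its own steps, because after $l+1$ steps from $(X_0,X_1)$ it sits at $(X_{l+1},X_{l+2})$, whose first coordinate $X_{l+1}$ has undergone exactly $l$ transitions of $P$ starting from $X_1$ — which is precisely where the minorization $P^l(\cdot,\cdot)\geq\lambda\phi$ can be applied — and the remaining $P$-step produces $X_{l+2}$. The second is that the minorization must hold on all of $\mathscr X\otimes\mathscr X$ and not merely on measurable rectangles; phrasing the Doeblin inequality as an inequality of integrals against nonnegative test functions, as above, handles this cleanly and avoids any $\pi$-system/monotone-class argument.
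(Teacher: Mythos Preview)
Your argument is correct and, in fact, cleaner than what the paper does. The paper does not verify Doeblin's condition directly; instead it passes through the equivalent characterization via geometric drift: it takes a Lyapunov function $V_X$ satisfying $PV_X-V_X\leq -\beta V_X + b\,\mathbbm 1_C$ for a petite set $C$, sets $V_{\tilde X}(x,y)=V_X(y)$, and then argues that the drift inequality and petiteness transfer to the product space (invoking the Meyn--Tweedie machinery to close the loop back to Doeblin). Your route is more elementary---it needs nothing beyond the definition of the second-order kernel and the measure-theoretic reading of the minorization---and it delivers an explicit Doeblin triple $(\tilde l,\tilde\lambda,\tilde\phi)=(l+1,\lambda,\phi\otimes P)$. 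That last point is not cosmetic: the downstream bounds in the paper (e.g.\ the constant $\alpha_1=2(l_{\tilde P}+1)/\lambda_{\tilde P}$ in the MTBFA theorem) depend on the Doeblin coefficients of the lifted chain, and your construction makes those constants concrete, whereas the drift-condition route leaves them implicit. The paper's approach, on the other hand, would generalize more readily if one weakened the hypothesis from Doeblin to merely geometric ergodicity with an unbounded $V$.
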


\begin{proof}
    Please refer to Appendix \ref{app: UE for extended space}.
\end{proof}

Let $F_P$ and $F_Q$ be defined as in \eqref{eqn: second order kernel} for $P$ and $Q$, respectively. The sequential hypothesis test at time $t$ is
\begin{align*}
    \ALP H_0: \tilde X_i \sim F_P,\, \forall i<t; \quad
    \ALP H_1: 
    \tilde X_i \sim
    \begin{cases}
        F_Q,\quad i > \tau,\\
        F_P, \quad \text{otherwise,}
    \end{cases}
\end{align*}
$\exists 1\leq \tau < t$. The stopping time $T$ can be interpreted as the first time the null hypothesis $\ALP H_0$ is rejected.
To evaluate the detection performance of $T$, consider the mean time between false alarms (MTFBA) and mean delay (MD) as follows.
\begin{align}
\label{def: MTBFA+MD}
    \text{MTBFA}(T) &= \mathbb{E}_\infty[T]\\ 
    \text{MD}(T) &= \mathbb{E}_{\tau}[T-\tau|T>\tau],
\end{align}
where $\mathbb{E}_\infty$ denotes the expectation under $\mathcal{H}_0$, i.e., $\tau=\infty$, and $\mathbb{E}_\tau$ denotes the expectation under $\mathcal{H}_1$, i.e., $\tau$ is finite. 

To distinguish $F_P$ and $F_Q$, we adopt maximum mean discrepancy (MMD)  $\gamma_k(F_P,F_Q)$ as the distance function. In the next section, we discuss, how to use dependent samples to estimate $\gamma_k(F_P,F_Q)$ and the corresponding error bounds induced by the dependence. For the remainder of the paper, we assume that the kernel function $k$ is characteristic such that $\gamma_k$ is a metric for probability measures.

\subsection{MMD Estimation for Dependent Observations}

An error bound is presented in \cite{cherief2022finite} for the empirical mean estimator with samples satisfying the weak dependence condition in Assumption \ref{asp: weak dependence 1}. We adopt their results for the MMD estimator of $\gamma_k(F_P,F_Q)$ and the estimation error bounds under both null and alternative hypotheses. 

Let $\{X_i\}_{i=1}^{n+1}$ and $\{Y_i\}_{i=1}^{m+1}$ be the two independent sets of samples generated from kernels $P$ and $Q$, respectively. We have $\{\tilde X_i\}_{i=1}^n$ and $\{\tilde Y_i\}_{i=1}^m $ as the second-order Markov chains with kernels $\tilde P \coloneqq P^{\otimes 2}$ and $\tilde Q \coloneqq Q^{\otimes 2}$. Assume all samples are obtained from the stabilized Markov chains, i.e., $\tilde X_i \sim F_{P}$ and $\tilde Y_i \sim F_{Q}$. Then $\gamma_k(\hat F_{P_n}, \hat F_{Q_n})$ is an estimator of $\gamma_k(F_P,F_Q)$, where $\hat F_{P_n}=(1/n)\sum_{i=1}^n \delta_{\tilde X_i}$ is the empirical measure of $F_P$ and $\delta_{\tilde X_i}$ is the Dirac measure of sample $\tilde X_i$. Similarly, $\hat F_{Q_n}$ is defined for $F_Q$. Now, we introduce the weak dependence condition proposed in \cite{cherief2022finite}, which is necessary for the consistency of MMD estimation.
\begin{definition}
\label{def: rho coefficient}
    Define for any $t\in\mathbb{N}$,
    \begin{align}
    \label{eqn: weak dependence coefficient}
        \rho_t = \left|\mathbb{E}\langle k(X_t, \cdot)-\pi^0k, k(X_0, \cdot) - \pi^0k\rangle_{\ALP H}\right|.
    \end{align}
\end{definition}
\begin{assumption}
\label{asp: weak dependence 1}
    There exists a $\Sigma < +\infty$ such that $\sum_{t=1}^n \rho_t \leq \Sigma$ for $\forall n\in\mathbb{N}$, where $n$ is the length of the sample trajectory.
\end{assumption}
Note that Assumption \ref{asp: weak dependence 1} is automatically satisfied for the Markov chain $\{X_t\}_t$ satisfying the Doeblin's condition (Assumption \ref{asp: Doeblin condition}). As suggested by Proposition 4.4 in \cite{cherief2022finite}, Assumption \ref{asp: weak dependence 1} is more general than $\beta$-mixing property, which is a consequence of the Markov chain being uniformly ergodic due to Theorem 3.7 in \cite{bradley2005basic}. As a result of Lemma \ref{lem: UE for extended space}, the second-order chain $\{\tilde X_i\}_{i\in\mathbb{N}}$ also satisfies Assumption \ref{asp: weak dependence 1}.


Now, given Assumption \ref{asp: weak dependence 1} is satisfied, the MMD estimator using empirical distributions is asymptotically consistent for dependent observations, as we show below.

\begin{lemma} \label{lem: MMD consistency} (Consistency)
    Let $X=\{X_i\}_{i=1}^n$ be a Markov chain generated from Markov kernel $P$, which satisfies Assumption \ref{asp: weak dependence 1} with $\Sigma_X$. Let $\hat{P}_n=(1/n)\sum_{i=1}^n \delta_{X_i}$ is the empirical measure. Similarly, $Y = \{Y_j\}_{j=1}^n$ is generated using Markov kernel $Q$. We define  $\Sigma_Y, \hat{Q}_n$ analogously as above. Given that $k$ is characteristic, we have
    \begin{align*}
        \Big|\mathbb{E}\Big[\gamma_k\left(\hat{P}_n,\hat{Q}_n\right)\Big]-\gamma_k&(P,Q)\Big|\leq c_{X,Y}(n),
    \end{align*}
    where $c_{X,Y}(n)=\sqrt{\frac{1+2\Sigma_X}{n}}+\sqrt{\frac{1+2\Sigma_Y}{n}}.$
\end{lemma}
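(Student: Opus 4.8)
The plan is to split the two-sample error into two one-sample mean-embedding errors using the triangle inequality, and then bound each one-sample error by a second-moment computation that exploits the summability of the weak-dependence coefficients $\rho_t$.

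\textbf{Step 1 (triangle inequality).} I would first use that $\gamma_k$ is induced by the Hilbert-space norm, $\gamma_k(\mathsf{P},\mathsf{Q})=\|\mu_{\mathsf{P}}-\mu_{\mathsf{Q}}\|_{\mathcal{H}_k}$, so it satisfies the triangle inequality (characteristicness of $k$ is not actually needed for the bound itself; it only guarantees that $\gamma_k(P,Q)$ is a genuine metric). Pathwise this gives $|\gamma_k(\hat{P}_n,\hat{Q}_n)-\gamma_k(P,Q)|\le \gamma_k(\hat{P}_n,P)+\gamma_k(\hat{Q}_n,Q)$, and taking expectations, $|\mathbb{E}[\gamma_k(\hat{P}_n,\hat{Q}_n)]-\gamma_k(P,Q)|\le \mathbb{E}[\gamma_k(\hat{P}_n,P)]+\mathbb{E}[\gamma_k(\hat{Q}_n,Q)]$. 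It then suffices to show $\mathbb{E}[\gamma_k(\hat{P}_n,P)]\le\sqrt{(1+2\Sigma_X)/n}$ and the analogous inequality for $Q$.

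\textbf{Step 2 (Jensen and expansion of the square).} Writing $\mu_P=\mathbb{E}_{X\sim P}[k(X,\cdot)]$ for the embedding of the stationary law (the quantity denoted $\pi^0k$ in Definition \ref{def: rho coefficient}), we have $\gamma_k(\hat{P}_n,P)=\big\|\tfrac1n\sum_{i=1}^n(k(X_i,\cdot)-\mu_P)\big\|_{\mathcal{H}_k}$. I would apply Jensen's inequality to get $\mathbb{E}[\gamma_k(\hat{P}_n,P)]\le\big(\mathbb{E}\big\|\tfrac1n\sum_{i=1}^n(k(X_i,\cdot)-\mu_P)\big\|_{\mathcal{H}_k}^2\big)^{1/2}$ and then expand the squared norm as $\tfrac1{n^2}\sum_{i,j=1}^n\mathbb{E}\langle k(X_i,\cdot)-\mu_P,\,k(X_j,\cdot)-\mu_P\rangle_{\mathcal{H}_k}$.

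\textbf{Step 3 (dependence bookkeeping).} For the diagonal terms $i=j$, the reproducing property and $\sup_{x,y}|k(x,y)|\le 1$ give $\mathbb{E}\|k(X_i,\cdot)-\mu_P\|_{\mathcal{H}_k}^2=\mathbb{E}[k(X_i,X_i)]-\|\mu_P\|_{\mathcal{H}_k}^2\le 1$, contributing at most $n$ to the double sum. For $i\ne j$, stationarity of the stabilized chain makes $\mathbb{E}\langle k(X_i,\cdot)-\mu_P,k(X_j,\cdot)-\mu_P\rangle$ a function of $|i-j|$ only, and Definition \ref{def: rho coefficient} identifies its absolute value as $\rho_{|i-j|}$; since for each $t\in\{1,\dots,n-1\}$ there are $2(n-t)$ ordered pairs with $|i-j|=t$, the off-diagonal contribution is at most $2\sum_{t=1}^{n-1}(n-t)\rho_t\le 2n\sum_{t=1}^{n-1}\rho_t\le 2n\Sigma_X$ by Assumption \ref{asp: weak dependence 1}. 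Hence the double sum is $\le n(1+2\Sigma_X)$; dividing by $n^2$ and taking square roots yields $\mathbb{E}[\gamma_k(\hat{P}_n,P)]\le\sqrt{(1+2\Sigma_X)/n}$. Running the same argument on the independent chain $Y$ gives the $Q$-bound, and adding the two estimates from Step 1 proves the claim.

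\textbf{Expected main obstacle.} The only part requiring care is Step 3: matching each off-diagonal inner product with $\rho_{|i-j|}$, which is exactly where stationarity of the stabilized chain is essential (otherwise the inner product would depend on both $i$ and $j$, not just their gap), and then tracking the constant through the double count $\sum_{i\ne j}(\cdot)=2\sum_{t\ge 1}(n-t)(\cdot)$ so that the factor $1+2\Sigma_X$ comes out exactly. Steps 1 and 2 are routine.
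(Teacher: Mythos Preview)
Your argument is correct. The paper itself does not give a self-contained proof: it simply states that the lemma ``is a direct consequence of Lemma 7.1 in \cite{cherief2022finite}'' and stops. Your Steps 1--3 are essentially a clean unpacking of that cited result---the triangle-inequality reduction to the one-sample embedding error, followed by Jensen and the second-moment expansion with the $\rho_t$-bookkeeping is precisely how that lemma is proved---so your approach is the same as the paper's by reference, just made explicit.
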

\begin{proof}
This is a direct consequence of Lemma 7.1 in \cite{cherief2022finite}.
\end{proof}

\subsection{Motivating Examples}
\label{sec: example section}
We now present examples of Markov chains induced by a linear time-invariant (LTI) system with random disturbance as input which satisfies both Assumptions \ref{asp: Doeblin condition} and \ref{asp: weak dependence 1}. 
\begin{example}(LTI system)
\label{ex: LTI}
    Consider a stable and controllable LTI system $(A, B)$ as following 
    $$x_{n+1} = Ax_{n} + Bw_{n}$$
    where system state $x_n\in\mathbb{R}^{d_1}$, control input $w_n\in\mathbb{R}^{d_2}$, and system dynamics $A\in\mathbb{R}^{d_1\times d_1}$, $B\in\mathbb{R}^{d_1\times d_2}$, for all $n \in \mathbb{N}$. Let the initial state be bounded such that $\|x_0\| \leq M_0$ for some positive finite $M_0$. The disturbance $(w_n)_{n\in\mathbb{N}}$ is a sequence of i.i.d. zero-mean random variable drawn from a distribution $\Gamma$ with compact support, i.e., there exists $M_w \in (0, \infty)$ such that $\|w_n\| \leq M_w$ for all $n\in\mathbb{N}$ almost surely. $\Gamma$ is non-singular with respect to the Lebesgue measure on $\mathbb R^{d_2}$. Let $\sigma_{\max}(\cdot)$ denote the largest singular value of a matrix. Then, we can conclude that for all $n\in\mathbb{N}$, 
    $$\|x_n\| \leq M_0 + \frac{\sigma_{\max}(A)}{1-\sigma_{\max}(A)}M_w,$$
    hence the state space of this Markov chain is compact. By Proposition 6.3.5 and Theorem 16.2.5 from \cite{meyn2012markov}, $\{x_n\}_{n\in\mathbb{N}}$ is an irreducible, aperiodic, T-chain with compact state space. Therefore, it is uniformly ergodic.
    It can be shown \cite[Proposition 4.5]{cherief2022finite} that given the $L$-Lipschitz function $F: \mathbb{R} \to \mathbb{C}$ such that the reproducing kernel $k(x,y)=F(\|x-y\|)$, the Assumption \ref{asp: weak dependence 1} is satisfied with 
    \begin{equation*}
        \Sigma=\frac{2L\sigma_{\max}(A)\sigma_{\max}(B)}{(1-\sigma_{\max}(A))^2}\mathbb{E}[||w_n||].
    \end{equation*}
\end{example}
A numerical simulation is provided in section \ref{sec: numerical simulations} based on this example. Due to limited space, we briefly characterize a uniformly ergodic nonlinear system. 
\begin{example}(Nonlinear system)
    For $t\geq 0$, $X_t$ and $W_t$ are random variables on $\mathbb{R}^n$ and $\mathbb{R}^m$, satisfying the following update equation.
    $$X_{t+1} = F(X_t, W_{t+1}),$$
    for some $C^\infty$ function $F: \mathcal{X} \times O \to \mathcal{X}$, where $\mathcal{X}$ is an open subset of $\mathbb{R}^n$ and $O$ is an open subset of $\mathbb{R}^m$. The random variables $\{W_t\}_{t\geq 0}$ are a sequence of noise with marginal distribution $\Gamma$ and density function $\gamma$ supported on an open set $O$. If the following conditions are satisfied, then $\mathbf{X} = \{X_t\}_{t\geq 0}$ is a uniformly ergodic Markov chain.
    \begin{enumerate}
        \item The nonlinear model with the following trajectories 
        \begin{align*}
            x_t = F_t(x_0, u_1, u_2, ..., u_t), \quad t\geq 1,
        \end{align*}
        is forward accessible if the set of the reachable state $A(x)\coloneqq \cup_{t=0}^\infty A_t(x)$ 
        has a non-empty interior for every $x\in\mathcal{X}$, where $A_t(x) \coloneqq \big\{F_t(x, u_1, ..., u_t):u_i\in O,1\leq i\leq t \big\}$ is the $t$ step reachable set from $x$. \label{cnd: 1}
        
        \item The density function $\gamma$ is lower semicontinuous on $\mathbb{R}^m$ and the support $O\coloneqq \{w\in\mathbb{R}^m:\gamma(w) > 0\}$ is open. \label{cnd: 2}
        
        \item Define $\Omega(x) \coloneqq \cap_{N=1}^{\infty}\{\cup_{t=N}^\infty A_t(x)\}$. There exists a global attracting state $x^* \in \Omega(y)$ for all $y \in \mathcal{X}$. \label{cnd: 3}
       
        \item The sets $O$ and $\Omega(x^*)$ are connected.  \label{cnd: 4}
        
        \item $\Omega(x^*)$ is compact. \label{cnd: 5}
    \end{enumerate}
    Conditions \ref{cnd: 1} and \ref{cnd: 2} yield that $\mathcal{X}$ is a T-chain (\cite[Proposition 7.1.2]{meyn2012markov}). Condition \ref{cnd: 3} allows the system to have a unique minimal invariant set $\Omega(x^*)$ when combined with \ref{cnd: 1} and \ref{cnd: 2} gives the fact that $\mathcal{X}$ is irreducible (\cite[Theorem 7.2.6]{meyn2012markov}). Condition \ref{cnd: 4} requires the system to be aperiodic (\cite[Proposition 7.3.4]{meyn2012markov}). Together with condition \ref{cnd: 5}, the Markov chain $\mathcal{X}$ satisfies the assumptions of Theorem 16.2.17 of \cite{meyn2012markov}, which implies the restriction of $\mathcal{X}$ on $\Omega(x^*)$ is uniformly ergodic.
\end{example}



We next present the test statistic and analyze its performance.

\section{Main Results}
\label{sec: main results}

In this section, we introduce a CUSUM-typed online test statistic based on the MMD norm, and we give the theoretical guarantees on MTBFA and MD when applied to Markovian observations. The Kernel CUSUM algorithm is proposed for independent samples in \cite{flynn2019change}. Our test statistics and theirs are similar in form. However, it requires a completely different approach to derive the theoretical bounds for MTBFA and MD, as the samples are dependent. 

We first assume the availability of previous system observations under the pre-change kernel. The length of the observed sample path is sufficiently long. Let $X^{m+1} = \{X_t\}_{t=1}^{m+1}$ denote past $m+1$ observations and $\tilde X_t = (X_t, X_{t+1})$. Let $Y_t$ be the current system observation at time $t \geq 0$, whose kernel is subject to change from $P$ to $Q$, and $\tilde Y_t=(Y_t,Y_{t+1})$.
To estimate MMD from samples, we should maintain a buffer $B_t^r = \{Y_i\}_{i=t-r}^{t}$ of size $r\in\mathbb{N}$. At time $t$, the buffer always contains the most recent $r+1$ observation so that we can have $r$ concatenated states $\{\tilde Y_i\}_{i=t-r+1}^{t}$.

Now, let us introduce the test statistics and the stopping rule by first defining the following function $s: \mathbb{X}^r \to \mathbb{R}$,
\begin{align} \label{eqn: MMD estimator with r and m}
    & s(B_t^r) = \Bigg(\frac{1}{r^2}\sum_{t-r\leq i,j\leq t} k(\tilde Y_i, \tilde Y_j) + \frac{1}{m^2}\sum_{1\leq i,j \leq m} k(\tilde X_i, \tilde X_j) \nonumber\\
    & - \frac{2}{mr}\sum_{i=t-r}^{t}\sum_{j=1}^{m}k(\tilde Y_i, \tilde X_j)\Bigg)^{1/2} - c_{\tilde X, \tilde Y}(m, r) - \epsilon,
\end{align}
where $c_{\tilde X, \tilde Y}(m, r) = \sqrt{\frac{1 + \Sigma_{\tilde X}}{m}} + \sqrt{\frac{1 + \Sigma_{\tilde Y}}{r}}$ is the estimation error bound defined in Lemma \ref{lem: MMD consistency}, $\epsilon > 0$ is a small arbitrary constant, and $\Sigma_{\tilde X}$, $\Sigma_{\tilde Y}$ are defined in Assumption \ref{asp: weak dependence 1}. To justify the use of $\epsilon$, we argue that $c_{\tilde X, \tilde Y}(m, r)$ cannot be computed exactly. An overestimation of $c_{\tilde X, \tilde Y}(m, r)$ is often preferred over an underestimation to ensure a low false alarm rate. Note that when the past observation and buffer have fixed length $m$ and $r$, $c_{\tilde X,\tilde Y}(m, r)$ is a constant. Let the cumulative sum of the function $s$ from $t=k$ to $n$ be $S_{k:n}$, where $k \geq 1$ and $n \geq k+1$,
\begin{align}
\label{eqn: cusum of mmd}
   S_{k:k} = s(B^r_k),\quad S_{k:n} = S_{k:n-1} + s(B^r_n).
\end{align}
Define the test statistics $\hat S_{n}$ and the corresponding stopping rule with threshold $b \geq 0$ and minimum sample $M\in\mathbb{N}$ as:
\begin{align} \label{eqn: test statistic}
    \hat S_{n} &= \max_{1\leq k \leq n-M} S_{k:n},
\end{align}
\begin{equation} \label{eqn: stopping rule}
    T(b, M) = \inf\left\{n\in\mathbb{N}:\ \hat S_{n} \geq b,\ n > M\right\}.
\end{equation}


Let $T$ be the stopping time as defined in \eqref{eqn: stopping rule}. For fixed sample size $m$ under pre-change kernel $P$ and a sliding window of size $r$ under post-change kernel $Q$, we have the following results.
\begin{theorem}[MTBFA Lower Bound]
\label{thm: MTBFA}
Suppose that Markov kernel $P$ satisfies Doeblin's condition and $Q=P$. Denote $\tilde P = P^{\otimes r+1}$. There exists $\alpha >0$ such that for $b$ sufficiently large and $\epsilon > 0$, we have
    \begin{align*}
        \text{MTBFA}(T(b, M))\geq M+\frac{\sqrt{2}-2}{6} + \frac{2}{3}\exp\bigg[\frac{4\epsilon(b - \alpha)}{\alpha^2}\bigg].
    \end{align*}
\end{theorem}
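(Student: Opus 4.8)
I would bound $\mathbb{P}_\infty(T\le N)$ from above for $N$ growing linearly in $b$ (here $\mathbb{P}_\infty,\mathbb{E}_\infty$ denote the no-change law, valid since $Q=P$), and then use $\mathbb{E}_\infty[T]=\sum_{n\ge 0}\mathbb{P}_\infty(T>n)\ge N\,\mathbb{P}_\infty(T>N)=N\bigl(1-\mathbb{P}_\infty(T\le N)\bigr)$. Because the observed stream is a single uniformly ergodic chain with kernel $P$, the strong Markov property at $\tau$ shows that the conditional quantity $\mathbb{E}_\tau[T-\tau\mid T>\tau]$ in \eqref{def: MTBFA+MD} exceeds, up to an additive constant that I will fold into $\alpha_1$, the expected false-alarm time of a fresh run; so it suffices to lower bound $\mathbb{E}_\infty[T]$. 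By \eqref{eqn: test statistic}--\eqref{eqn: stopping rule}, $\{T\le N\}=\bigcup_{n=M+1}^{N}\bigcup_{k=1}^{n-M}\{S_{k:n}\ge b\}$ with $S_{k:n}=\sum_{t=k}^{n}s(B_t^r)$ as in \eqref{eqn: cusum of mmd}, so a union bound reduces everything to bounding $\mathbb{P}_\infty(S_{k:n}\ge b)$ for each window.

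Two facts control each window. First, the argument of the square root in \eqref{eqn: MMD estimator with r and m} is the plug-in squared MMD $\gamma_k^2$ of two empirical measures and hence lies in $[0,4]$ (the canonical feature maps have norm at most $1$), so each increment obeys $s(B_t^r)\le A:=2-c_{\tilde X,\tilde Y}(m,r)$; consequently $S_{k:n}\ge b$ is possible only when the window length $L:=n-k+1$ satisfies $L\ge b/A$, i.e.\ only windows of length of order $b$ contribute. Second, applying Lemma \ref{lem: MMD consistency} to the two second-order chains, which are uniformly ergodic by Lemma \ref{lem: UE for extended space} (and whose higher-order blocks are likewise uniformly ergodic), together with $\gamma_k(F_P,F_Q)=0$, gives $\mathbb{E}_\infty[s(B_t^r)]\le 0$; hence $\{S_{k:n}\ge b\}\subseteq\{\sum_{t=k}^{n}(s(B_t^r)-\mathbb{E}_\infty[s(B_t^r)])\ge b\}$.

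It then remains to bound $\mathbb{P}_\infty(\sum_{t=k}^{n}(s(B_t^r)-\mathbb{E}_\infty[s(B_t^r)])\ge b)$. The sum is a bounded function of the buffer path $Y_{k-r+1},\dots,Y_{n+1}$ (a block of the kernel-$P$ chain) and of the fixed pre-change path $X^{m+1}$, which are independent; perturbing one $Y_i$ alters $O(r)$ of the estimates, each by $O(1/r)$ (the bounded-difference coefficient of an MMD estimate in one of its samples), hence the sum by $O(1)$, while perturbing one $X_j$ changes all $L$ estimates by $O(1/m)$, hence the sum by $O(L/m)$. Conditioning on $X^{m+1}$ and invoking a Hoeffding-type inequality for uniformly ergodic Markov chains (the same device used in the mean-delay analysis, cf.\ \cite{glynn2002hoeffding}), then handling the residual $X$-dependence the same way, should yield $\mathbb{P}_\infty(S_{k:n}\ge b)\le\exp\!\bigl(-b^2/(\kappa(L+L^2/m))\bigr)$ with $\kappa$ depending only on the Doeblin coefficients of $P$ and on $r$. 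Since the pre-change path is assumed long ($m$ at least of order $b$) and $L\le N$ is of order $b$, this is $\exp(-\Omega(b))$; summing over the $O(b^2)$ admissible pairs $(k,n)$ gives $\mathbb{P}_\infty(T\le N)\le O(b^2)\exp(-\Omega(b))$. Taking $N=M-1+b(1-\varepsilon_b)$ with $\varepsilon_b\downarrow 0$ slowly (e.g.\ $\varepsilon_b=1/\log b$, so $b\varepsilon_b=o(b)$ while $\mathbb{P}_\infty(T\le N)\to 0$) yields $\mathbb{E}_\infty[T]\ge N(1-o(1))=M-1+b(1-o(1))$, and undoing the first reduction gives the stated bound for a suitable $\alpha_1>0$.

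\textbf{Main obstacle.} The hard part is entirely the dependence structure: consecutive increments $s(B_t^r)$ share heavily overlapping buffers and the same fixed pre-change sample, so $\mathbb{E}_\infty[s(B_t^r)]\le 0$ is only an unconditional statement (from Lemma \ref{lem: MMD consistency}), not a one-step conditional drift, and $S_{k:n}$ is not a supermartingale in any natural filtration. Making the concentration step rigorous will require a Markov-chain bounded-differences/Hoeffding inequality that simultaneously controls (i) the overlapping-window dependence inside the buffer chain, through uniform ergodicity of its higher-order blocks (Lemma \ref{lem: UE for extended space}, with constants traced to Assumption \ref{asp: Doeblin condition}), and (ii) the randomness of the long pre-change trajectory; quantifying the finite-but-large $m$ is precisely what forces the ``$b$ sufficiently large'' hypothesis and determines $\alpha_1$. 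A secondary technical point is carrying out the reduction from the conditional quantity in \eqref{def: MTBFA+MD} to a fresh-run expectation uniformly in $\tau$.
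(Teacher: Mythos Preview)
Your skeleton is the same as the paper's: write $\{T\le N\}=\bigcup_{n,k}\{S_{k:n}\ge b\}$, use $\mathbb{E}_\infty[s(B_t^r)]\le 0$, and control each $\mathbb{P}_\infty(S_{k:n}\ge b)$ by a Hoeffding inequality for uniformly ergodic chains. The differences are in the two sub-steps. For concentration, the paper is more direct than you are: it simply observes that $(B_t^r)_{t\ge r}$ is itself a uniformly ergodic Markov chain on $\mathbb X^{r+1}$ with kernel $\tilde P$ (your Lemma \ref{lem: UE for extended space} argument), treats $s(\cdot)$ as a bounded function of that single state (the reference block $X^{m+1}$ is held fixed), and applies Theorem \ref{thm: Hoeffding inequality} verbatim to get $\mathbb{P}_\infty(S_{k:n}\ge b)\le \exp\!\bigl(-2(b-\alpha_1)^2/(n\alpha_1^2)\bigr)$ with $\alpha_1=2(l_{\tilde P}+1)/\lambda_{\tilde P}$; your bounded-differences route on the underlying $Y$-chain would also work but is heavier than necessary, and the extra care you take with the randomness of $X^{m+1}$ is a point the paper simply does not address. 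For the endgame, the paper does \emph{not} use your inequality $\mathbb{E}_\infty[T]\ge N\,\mathbb{P}_\infty(T>N)$; instead it keeps the full tail sum $\mathbb{E}_\infty[T]=M-1+\sum_{C\ge M+1}\mathbb{P}_\infty(T>C)$, truncates at a level $L$ solving $L=M+\exp\!\bigl((b-\alpha_1)^2/(L\alpha_1^2)\bigr)$, and invokes a Lambert-$W$ estimate to conclude $L\ge (b-\alpha_1)(1+o(1))$. Your truncation at $N\approx b$ with $\mathbb{P}_\infty(T\le N)=o(1)$ is cleaner and yields the same order; the paper's route makes the constant $\alpha_1$ explicit in terms of the Doeblin coefficients of $\tilde P$.
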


\begin{proof}
    Please refer to Appendix \ref{app: MTBFA}.
\end{proof}

\begin{theorem}[MD Upper Bound]
\label{thm: MD}
    Suppose that $P$ is the pre-change kernel and $Q$ is the post-change kernel, and they both satisfy Doeblin's condition. Assume that MMD kernel $k$ is characteristic and $\|k\|_\infty\leq 1$. Then, there exists $\alpha>0$ such that for sufficiently small $\epsilon$,
    \begin{equation*}
        \text{MD}(T(b,M))\leq\max\left\{M,\frac{b+\alpha}{D_r(P,Q)-\epsilon}\right\}(1+o(1)),
    \end{equation*}
    where $D_r(P,Q)$ is defined before \eqref{eqn: Dr(P,Q)}.
\end{theorem}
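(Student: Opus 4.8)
The plan is to run the classical drift-plus-concentration argument for CUSUM rules, adding two Markovian ingredients: a passage to a sliding-window chain so that the (overlapping) increments become an additive functional of a uniformly ergodic chain, and a geometric-ergodicity estimate for the transient caused by the change. Fix the change time $\tau$ and work under the post-change law, conditioning on $\{T>\tau\}$. By \eqref{def: MTBFA+MD} it suffices to bound $\sum_{j\ge1}\mathbb{P}_\tau(T>\tau+j-1\mid T>\tau)$, and since a large threshold makes a false alarm before $\tau$ vanishingly unlikely (cf.\ the pre-change analysis underlying Theorem \ref{thm: MTBFA}), $\mathbb{P}_\tau(T>\tau)=1-o(1)$ as $b\to\infty$; hence it is enough to control the unconditional tail $\sum_{n\ge\tau}\mathbb{P}_\tau(T>n)$.

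\smallskip
\noindent First I would isolate the post-change drift. Introduce the sliding-window chain $W_t:=(Y_{t-r},\dots,Y_t)$ and write $s(B_t^r)=g(W_t)$, a function bounded by a constant since $\|k\|_\infty\le1$ and $c_{\tilde X,\tilde Y}(m,r)<\infty$. For $n\ge\tau+r$ every concatenated state in $B_n^r$ is generated by $Q$, so by Lemma \ref{lem: MMD consistency} the stationary mean $\pi_W(g)$ dominates the per-step drift $D_r(P,Q)$ of \eqref{eqn: Dr(P,Q)}; moreover $P\neq Q$ together with $k$ characteristic and Lemma \ref{lem: RKHS for MC} give $\gamma_k(F_P,F_Q)>0$, so we are in the regime $D_r(P,Q)>0$ (otherwise the claimed bound is vacuous). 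Because the post-change chain $(Y_i)_{i>\tau}$ is launched from $X_\tau\sim\pi_P$ rather than from $\pi_Q$, $\mathbb{E}[s(B_t^r)]$ differs from $\pi_W(g)$ by at most $\eta_{t-\tau}$ with $\sum_s\eta_s=:\eta_\infty<\infty$ by geometric ergodicity of $Q$ (this is where Doeblin's condition on $Q$ in Assumption \ref{asp: Doeblin condition} enters). Summing, $\mathbb{E}[S_{\tau+r:n}]\ge N\,D_r(P,Q)-\eta_\infty$ with $N:=n-\tau-r+1$.

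\smallskip
\noindent Next I would add concentration. Since $(Y_i)_i$ satisfies Doeblin's condition, so does $(W_t)_t$ by the same argument as Lemma \ref{lem: UE for extended space}, hence $(W_t)_t$ is uniformly ergodic; $S_{\tau+r:n}=\sum_{t=\tau+r}^n g(W_t)$ is then a bounded additive functional of a uniformly ergodic chain, and the Hoeffding inequality for such chains \cite{glynn2002hoeffding} yields an exponential lower-tail bound $\mathbb{P}(S_{\tau+r:n}\le\mathbb{E}[S_{\tau+r:n}]-N\epsilon)\le\exp(-N\epsilon^2/C_H)$ with $C_H<\infty$ fixed by the Doeblin coefficients of $Q$ and by $r$. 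For $n\ge\tau+r+M$ the index $k=\tau+r$ is admissible in \eqref{eqn: test statistic}, so $\hat S_n\ge S_{\tau+r:n}$ and $\{T>n\}\subseteq\{S_{\tau+r:n}<b\}$. Choosing $\epsilon=\epsilon(b)\downarrow0$ slowly enough that $b\,\epsilon(b)^2\to\infty$ (e.g.\ $\epsilon(b)=b^{-1/3}$), the two estimates above give $\mathbb{P}_\tau(T>n)\le\exp(-N\epsilon(b)^2/C_H)$ as soon as $n>\tau+r-1+\tfrac{b+\eta_\infty}{D_r(P,Q)-\epsilon(b)}$. Summing the resulting geometric tail and using $\tfrac{b+\eta_\infty}{D_r(P,Q)-\epsilon(b)}=\tfrac{b+\eta_\infty}{D_r(P,Q)}(1+o(1))$ together with $r=o(b)$, we obtain
\begin{align*}
    \text{MD}(T(b,M))\ \le\ \max\Big\{M,\ \frac{b+\alpha}{D_r(P,Q)}\Big\}(1+o(1)),\qquad b\to\infty,
\end{align*}
with $\alpha:=\eta_\infty$, as claimed.

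\smallskip
\noindent \textbf{Main obstacle.} The hard part is the concentration step together with the transient control. One must (i) realize $S_{\tau+r:n}$ as a genuine additive functional of a \emph{uniformly ergodic} chain in spite of the overlapping buffers --- which forces the move to $W_t$ and a careful verification of its Doeblin property with explicit constants entering $C_H$ --- and (ii) bound, via geometric ergodicity of $Q$, the bias from the post-change chain not yet being stationary, showing it is summable so that it contributes only the additive constant $\alpha$ and the $1+o(1)$ factor. Obtaining the sharp leading constant $1/D_r(P,Q)$, rather than a lossy $C/D_r(P,Q)$ with $C>1$, is precisely what dictates letting $\epsilon(b)$ vanish slowly as above.
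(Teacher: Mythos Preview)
Your proposal is correct and follows essentially the same approach as the paper: establish positive post-change drift via Lemma \ref{lem: MMD consistency}, apply the Markov-chain Hoeffding inequality (Theorem \ref{thm: Hoeffding inequality}) to the sliding-window chain to bound $\mathbb{P}(S_{k:n}<b)$, and then sum the resulting tail. The paper is terser --- it sets $\tau=0$ (so your transient-bias term $\eta_\infty$ never appears), identifies $\alpha$ instead with the Hoeffding constant $2(l_{\tilde Q}+1)/\lambda_{\tilde Q}$ coming from the $\mu$-offset in Theorem \ref{thm: Hoeffding inequality}, and defers the final tail-sum and $1+o(1)$ bookkeeping to \cite{xian2016online} rather than carrying out your vanishing-$\epsilon(b)$ argument explicitly.
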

\begin{proof}
    Please refer to Appendix \ref{app: MD}.
\end{proof}

\begin{figure}[h]
        \centering
        \includegraphics[width=0.8\linewidth]{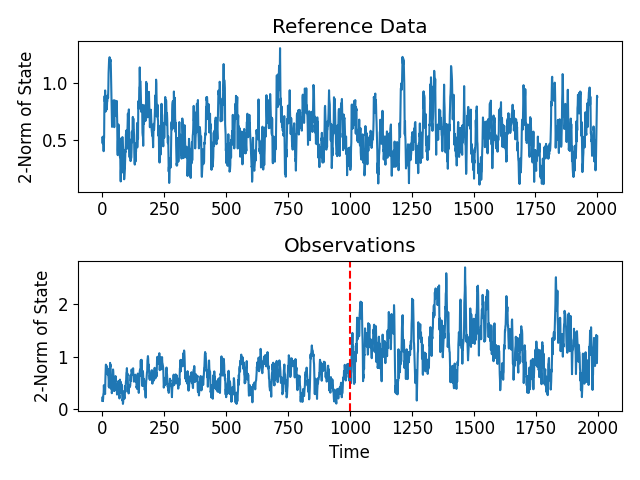}
        \includegraphics[width=0.8\linewidth]{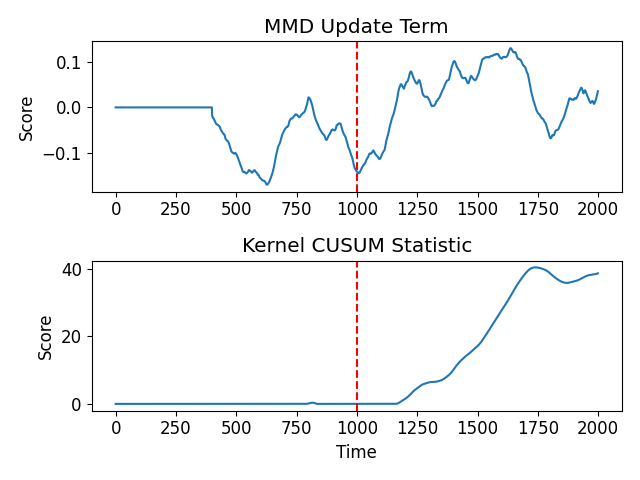}
        \caption{A change in the variance of $\omega_n$'s distribution is applied at the red breaking line ($\tau=1000$). The left plot shows the 2-norm of the system state. The middle plot shows the update term $s(B_t^r)$ at each step. The right plot shows the kernel CUSUM statistic $\hat S_n$.}
        \label{fig: mean change}
\end{figure}
\begin{figure}[h]
    \centering
    \includegraphics[width=0.8\linewidth]{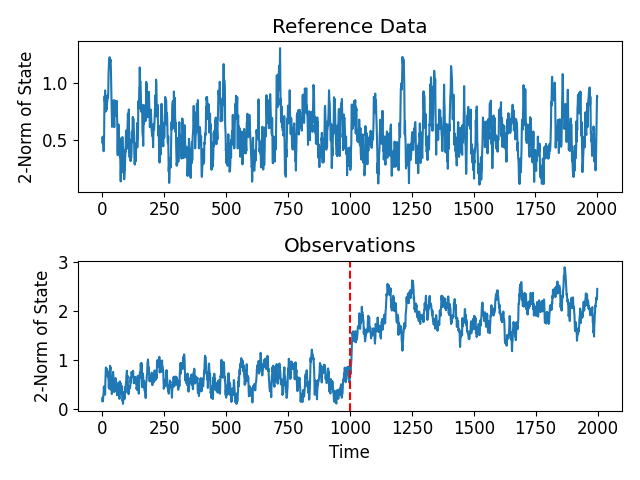}
    \includegraphics[width=0.8\linewidth]{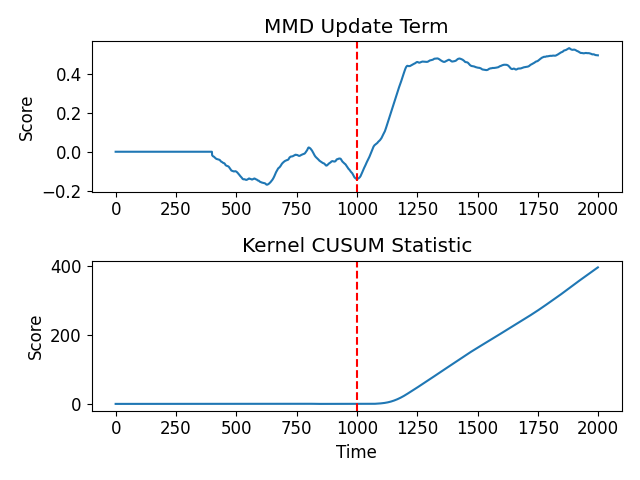}
    \caption{A change in the mean of $\omega_n$'s distribution is applied at the red breaking line ($\tau=1000$).}
    \label{fig: var change}
\end{figure}

\section{Numerical Simulations}
\label{sec: numerical simulations}
In this section, we evaluate the performance of the kernel CUSUM change detector through numerical simulations. We use auto-regressive (AR) processes with an abrupt change in the noise statistics to mimic the Markov chain with a change in the transition kernel. Let $A$ be a square matrix with a spectral radius less than 1, and $\{\omega_n\}_{n\in\Na}$ be a sequence of i.i.d. samples drawn from some distribution. Two scenarios of changes are simulated in the experiment: 1) a change in the variance of $\omega_n$; 2) a change in the mean of $\omega_n$. The state of the AR process $X_n$ evolves as 
\begin{align}
\label{eqn: AR process}
    X_{n+1} = AX_n + \omega_n
\end{align}
We use the average of three Gaussian kernels with $\sigma = 10^{-1}, 1, 10$ to capture values of different scales. We set $X_n\in\mathbb{R}^4$ throughout all simulations. The system matrix $A$ is chosen with a spectral radius of $0.95$. Pre-change system observations (reference data) are generated with $\omega_n$ i.i.d. drawn from a zero mean truncated Gaussian distribution with an identity covariance matrix multiplied by 0.1. The following modification was done to the distribution of $\omega_n$.
\begin{enumerate}
    \item Change in variance: after the change point, the covariance becomes an identity matrix multiplied by 0.2, while the mean of the Gaussian vector stays zero;
    \item Change in mean: after the change point, the mean of the Gaussian vector becomes 0.05, while the covariance stays the same as the pre-change distribution;
\end{enumerate}
Before each experiment, we calibrate the parameter $c_{\tilde X, \tilde Y}(m, r) + \epsilon$ by subtracting an appropriate constant from the MMD update term $s(B_t^r)$ such that it is negative at each time step when there is no change. 

Figures \ref{fig: mean change} and \ref{fig: var change} show that the kernel CUSUM change detector detects the change as expected. In both scenarios, the MMD update term turns positive, and the test statistic grows shortly after the change point. Overall, the proposed kernel CUSUM change detector for Markov chains can detect the change when the problem lacks distributional assumption for both pre- and post-change Markov kernel. 


\section{Conclusion}
\label{sec: conclusion}
In this study, we proposed a MMD-based kernel CUSUM change detector for Markov chains. 
For the proposed algorithm, we derived a lower bound on the mean time between false alarm (MTBFA) and an upper bound on the mean delay (MD). We further present two numerical simulations of change detection in autoregressive processes to demonstrate the effectiveness of the proposed method. 

For future works, we will investigate the relaxation of the uniform ergodicity condition in Assumption \ref{asp: Doeblin condition}. 
The application of this change detection method to attack detection and dynamic watermarking of general nonlinear systems will also be an objective of future works. 


\section{Appendix}
\subsection{Proof of Lemma \ref{lem: RKHS for MC}}
\label{app: RKHS for MC}
There can be two cases: $\pi_P\neq\pi_Q$ and $\pi_P = \pi_Q$. In case $\pi_P\neq\pi_Q$, then it is obvious that $\pi_P \otimes P \neq \pi_Q\otimes Q$ since the marginal of the two measures are different. In the other case, if $\pi_P = \pi_Q$, then pick a closed set $\mathbb A\subset\mathbb{X}$ such that $P(x,\cdot) \neq Q(x,\cdot)$ for all $x\in\mathbb A$ and $\pi_P(\mathbb A)>0$. Let $\mathbb B_x, x\in\mathbb A$ be any closed set such that $P(x,\mathbb B_x) >Q(x,\mathbb B_x)$, which exists due to the Jordan-Hahn decomposition theorem. Consider the set $\mathbb C = \cup_{x\in\mathbb A}\{x\}\times\mathbb B_x$ and note that $\overline{\mathbb C}$ is a measurable set since it is a closed set. 

We claim that any section $\overline{\mathbb C}_x:=\{y:(x,y)\in\overline{\mathbb C}\}$ is equal to $\mathbb B_x$ for all $x\in\mathbb A$. Contrary to the claim, suppose that $\overline{\mathbb C}_{\bar x}\neq \mathbb B_{\bar x}$ for some $\bar x\in\mathbb A$ and pick a point $\bar y\in\overline{\mathbb C}_{\bar x}\setminus \mathbb B_{\bar x}$. Pick a sequence $\{(\bar x,\bar y_n)\}_{n\in\Na}\subset \mathbb C$, such that $(\bar x,\bar y_n)\to(\bar x,\bar y)\in \overline{\mathbb C}$. This implies that $\{\bar y_n\}_{n\in\Na}\subset \mathbb B_{\bar x}$ and $\bar y_n\to \bar y$. Since, $\mathbb B_{\bar x}$ is a closed set, we have $\bar y\in\mathbb B_x$, which contradicts our assumption that $\overline{\mathbb C}_{\bar x}\neq \mathbb B_{\bar x}$. Thus, $\overline{\mathbb C}_x= \mathbb B_x$ for all $x\in\mathbb A$.

From Proposition 3.3.2 in \cite{bogachev2007measure}, we conclude that the functions $x\mapsto P(x,\mathbb B_x)$ and $x\mapsto Q(x,\mathbb B_x)$ are measurable functions since they are the composition of measurable functions. Thus, we conclude that
\beqq{&\pi_P \otimes P(\overline{\mathbb C}) = \int_{\mathbb A} P(x,\mathbb{B}_x) \pi_P(dx) \\
& > \int_{\mathbb A} Q(x,\mathbb{B}_x) \pi_P(dx) = \pi_Q\otimes Q(\overline{\mathbb C}) = 0.}
Thus, $\pi_P \otimes P \neq \pi_Q\otimes Q$ and the proof is complete.

\subsection{Proof of Lemma \ref{lem: UE for extended space}}
\label{app: UE for extended space}
Given $(X_t)_{t\in\mathbb{N}}$ is uniformly ergodic with $(\Omega,\mathcal{F},\mathbb P)$, there exists a $V_X(x)\geq1$ such that 
$$PV_X(x)-V_X(x)\leq-\beta V_X(x)+b1_C(x),$$
for some $\beta>0$, $b<\infty$, and some petite set $C$. As defined in Section 5.2 of \cite{meyn2012markov}, there exists a $m>0$, and a non-trivial measure $\nu_m$ on $\mathcal{B}(\Omega)$ such that for all $x\in\mathcal{C}$, $A\in\mathcal{B}(\Omega)$, $P^m(x,A)\geq\nu_m(A)$. Now for the second-order system $\tilde X_t=(X_t,X_{t+1})$, let $V_{\tilde X}(\tilde X_t)=V_X(X_{t+1})\geq 1$, we have
\begin{align*}
    & P^{\otimes 2}V_{\tilde X}(\tilde X_t)-V_{\tilde X}(\tilde X_t)=PV_X(X_{t+1})-V_X(X_{t+1})\\
    & \leq-(1+\beta) V_X(X_{t+1})+b1_C(X_{t+1})\\
    & \leq-(1+\beta) V_{\tilde X}(\tilde X_t)+b1_{C_0\times C}(\tilde X_t).
\end{align*}
for some $C_0\subseteq\mathcal{B}(\Omega)$ such that $P(C_0,C)>0$. Next, we need to show the petiteness of $C_0\times C$ in the second-order system. Let $\tilde\nu_{m+1}(C_1,C_2)=P(C_1,C_2)\nu_{m}(C_2)$, then for all $(x_1,x_2)\in C_0\times C_1$, and $(A_1,A_2)\in\mathcal{B}(\Omega\times\Omega)$ we have
\begin{align*}
    & P^{\otimes 2,m+1}((x_1,x_2),(A_1,A_2))\\ & =P^m(x_2,A_1)P(A_1,A_2)\geq\nu_m(A_1)P(A_1,A_2)\\
    & =\tilde\nu_{m+1}(A_1,A_2).
\end{align*}
Therefore by the same drift condition for the second-order system, $(\tilde X_t)_{t\in\mathbb{N}}$ satisfies Doeblin's condition.

\subsection{Proof of Theorem \ref{thm: MTBFA}}
\label{app: MTBFA}
The proof of MTBFA bound is inspired by \cite{xian2016online} with key steps replaced by the following Hoeffding's inequality for uniformly ergodic Markov chain on general state space. 
\begin{theorem}[Hoeffding's Inequality \cite{glynn2002hoeffding}]
\label{thm: Hoeffding inequality}
    Suppose that the Markov chain $\{X_t, t\geq 0\}$ is uniformly ergodic and let function $f: \mathbb{X} \to \mathbb{R}$ with $\|f\| = \sup_{x\in\mathbb{X}} f(x) < \infty$. Let $S_n = \sum_{t=1}^n f(X_t)$, $\mu = 2(l+1)\|f\|/\lambda$, where $l$ and $\lambda$ are described in Assumption \ref{asp: Doeblin condition}. Then, for any $\epsilon > 0$ and $n > \mu/\epsilon$, we have
    \begin{equation*}
        \mathbb{P}(\|S_n - \mathbb{E}[S_n]\| \geq n\epsilon) \leq 2 \exp\left\{-2\frac{(n\epsilon - \mu)^2}{n\mu^2}\right\}.
    \end{equation*}
\end{theorem}

For threshold $b \geq 0$, minimum sample $M$, and stopping rule $T(b, M)$, we have,
\begin{align}
    \mathbb{E}_\infty [T(b, M)] &= \sum_{C=1}^\infty \mathbb{P}_{\infty}\{T(b, M) > C\}\nn\\
    & = M + \sum_{C=M+1}^\infty \mathbb{P}_\infty\{T(b, M) > C\}. \label{eqn: mtbfa eqn 1}
\end{align}
Due to Lemma \ref{lem: UE for extended space}, $(B_t^r)_{t \geq r}$ is also a uniformly ergodic Markov chain with transition kernel $\tilde P \triangleq P^{\otimes r+1}$ and invariant measure $\pi_P\otimes P^{\otimes r}$. For $t > r$, $s(B^r_t)$ is a biased estimation of $\gamma_k(\pi_P\otimes P^{\otimes r-1}, \pi_P\otimes P^{\otimes r-1}) = 0$ minus the error bound in Lemma \ref{lem: MMD consistency}. Thus, we have $-\epsilon \geq \mathbb{E}[s(B_t^r)]\geq-2c_{\tilde X, \tilde Y}(r,m)$. Recall that $\epsilon$ is a small positive constant picked by the user. Let $l_{\tilde P}$ and $\lambda_{\tilde P}$ be the coefficients in Assumption \ref{asp: Doeblin condition} satisfied by Markov kernel $\tilde P$. Let $\alpha := 2(l_{\tilde P}+1)/\lambda_{\tilde P}$. Apply Theorem \ref{thm: Hoeffding inequality} to the following probability when $n > \max\{(\alpha-b), 0\}/\epsilon$,
\begin{align}
\label{eqn: mtbfa eqn 2}
    &\mathbb{P}_\infty(S_{1:n}>b)=\mathbb{P}_\infty\left(\sum_{t=0}^{n}s(B^r_t)>b\right)\nonumber\\
    =&\mathbb{P}\left(\sum_{t=0}^{n}s(B^r_t)-n\mathbb{E}_\infty[s(B_t^r)]>b-n\mathbb{E}_\infty[s(B_t^r)]\right)\nonumber\\
    \leq&\exp\left(-2\frac{(b - n\mathbb{E}_\infty[s(B_t^r)]-\alpha)^2}{n\alpha^2}\right)\\
    \leq&\exp\left(-2\frac{(b - \alpha +\epsilon n)^2}{n\alpha^2}\right).
\end{align}
The the right-hand side (RHS) attains maximum when $n = (b-\alpha) / \epsilon$. Expanding the RHS of \ref{eqn: mtbfa eqn 1} and applying the inequality in \ref{eqn: mtbfa eqn 2}. After rearranging the terms, we arrive at
\begin{align}
    &\sum_{C=M+1}^\infty \mathbb{P}_\infty \{T(b, M) > C\} \nn \\
    =& \sum_{C=M+1}^\infty \left(1 - \mathbb{P}_\infty\left\{\bigcup_{n=M+1}^{C} \{T(b, M) = n\}\right\}\right)\nonumber\\
    \geq& \sum_{C=M+1}^L \left(1 - \mathbb{P}_\infty\left\{\bigcup_{n=M+1}^{C}\bigcup_{k=1}^{n-M}\{S_{k:n} \geq b\}\right\}\right)\nonumber\\
    \geq& \sum_{C=M+1}^L \left(1 - \sum_{n=M+1}^{C}\sum_{k=1}^{n-M}\mathbb{P}_\infty\{S_{k:n} \geq b\}\right)\nonumber\\
    =& \sum_{C=1}^{L-M}\left(1 - \sum_{l=1}^{C}\sum_{k=1}^{l}\mathbb{P}_\infty\{S_{k:l+M} \geq b\}\right) \nonumber\\
    \geq& \sum_{C=1}^{L-M}\left\{1 - \frac{C(C+1)}{2}\exp{\left(-\frac{8\epsilon(b-\alpha)}{\alpha^2}\right)}\right\},\label{eqn: mtbfa eqn 3}
\end{align}
where $L > M$. When $L = L^*\coloneqq M + \frac{1}{2}\bigg(\sqrt{1+8\exp (\frac{8\epsilon(b-\alpha)}{\alpha^2})}-1\bigg)$, the RHS of the equation \eqref{eqn: mtbfa eqn 3} reaches maximum, where $L^*$ is the largest solution of 
\begin{align}
\label{eqn: original L eqn}
    (L-M)(L-M+1) = 2\exp\left[\frac{8\epsilon(b-\alpha)^2}{\alpha^2}\right].
\end{align}
$L^* \geq M+1$, whenever $b \geq \alpha$. Plug $L^*$ into \eqref{eqn: mtbfa eqn 3}, we have for large $b$,
\begin{align}
\label{eqn: mtbfa eqn 4}
    &\sum_{C=M+1}^\infty \mathbb{P}_\infty\{T(b, M) > C\}\nonumber\\
    \geq& L^*-M - \frac{1}{2}\exp{\left(-\frac{8\epsilon(b-\alpha)}{\alpha^2}\right)}\sum_{C=1}^{L^*-M}C(C+1)\nonumber\\
    \geq& \frac{2}{3}\exp\left(\frac{4\epsilon(b-\alpha)}{\alpha^2}\right) + \frac{\sqrt{2}-2}{6}.
\end{align}
Combining the equations \eqref{eqn: mtbfa eqn 1} and \eqref{eqn: mtbfa eqn 4}, we arrive at the desired result. 


\subsection{Proof of Theorem \ref{thm: MD}}
\label{app: MD}
    Since $P$ and $Q$ satisfies Assumption \ref{asp: weak dependence 1}, so does the $\tilde P$ and $\tilde Q$ with weak dependent coefficient $\Sigma_{\tilde P}$ and $\Sigma_{\tilde Q}$, where
    \begin{align*}
        \Sigma_{\tilde P}=\frac{4}{(1-\lambda_P)(1-(1-\lambda_P)^{1/l_P})},
    \end{align*}
    and a similar result holds for $\Sigma_{\tilde Q}$. Let $D_r(P,Q):=\gamma_k(\tilde P,\tilde Q)-2c_{X,Y}(r,m)$.
    Note that for any $t>r$, $s(B_t^r)$ defined in \eqref{eqn: MMD estimator with r and m} is an estimator of the MMD between $\tilde P$ and $\tilde Q$, using $m$ samples from $\tilde P$ and $r$ samples from $\tilde Q$. By Lemma \ref{lem: MMD consistency}, $s(B_t^r)$ is consistent such that
    \begin{align}
    \label{eqn: Dr(P,Q)}
        \mathbb{E}[s(B_t^r)]\geq\gamma_k(\tilde P,\tilde Q)-2c_{X,Y}(r,m) - \epsilon = D_r(P,Q) - \epsilon,
    \end{align}
    where $c_{X,Y}(r,m)=\sqrt{\frac{1+2\Sigma_{\tilde P}}{r}}+\sqrt{\frac{1+2\Sigma_{\tilde Q}}{m}}$ and $\epsilon < D_r(P,Q)$ is picked sufficiently small. Assume the change point $\tau=0$, and for any $n>n_0=\max\left\{M,\frac{b+\alpha}{D_r(P,Q)-\epsilon}\right\}$, where $\alpha=2m_{\tilde Q}/\lambda_{\tilde Q}$, we have
    \begin{align*}
        &\mathbb{P}(S_{0:n}<b)=\mathbb{P}\left(\sum_{t=0}^{n}s(B^r_t)<b\right)\nonumber\\
        \leq&\exp\left(-2\frac{(n\mathbb{E}[s(B_t^r)]-\alpha-b)^2}{n\alpha^2}\right)\\
        \leq&\exp\left(-2\frac{n(D_r(P,Q)-\epsilon)-\alpha-b)^2}{n\alpha^2}\right),
    \end{align*}
    where the first inequality follows from Theorem \ref{thm: Hoeffding inequality}. The rest of the proof follows from (14) and (15) in \cite{xian2016online}, where $I(Q,P)$ is replaced with $D_r(P,Q)-\epsilon$.

\bibliographystyle{ieeetr}
\bibliography{main}

\end{document}